\newtheorem{theorem}{Theorem}
\newtheorem{corollary}{Corollary}
\newtheorem{prop}{Proposition}
\DeclareMathOperator{\tr}{\mathrm{Tr}}
\DeclareMathOperator{\sinc}{\mathrm{sinc}}
\title{Robustness Analysis for Quantum Systems Controlled by Continuous-Time Pulses}
\author{S.\,P.\ O'Neil$^{1,*}$, \and
E.\,A.\ Jonckheere$^2$, \and S.\ Schirmer$^3$
\thanks{$^1$ Department of Electrical Engineering \& Computer Science, United States Military Academy, NY, USA. {\tt sean.oneil@westpoint.edu}}
\thanks{$^2$ Dept of Electrical \& Computer Engineering, University of Southern California, CA, USA. 
{\tt jonckhee@usc.edu}}
\thanks{$^3$ Faculty of Science \& Engineering, Physics, Swansea University, UK. {\tt s.m.shermer@gmail.com}}
\thanks{Any opinions in this work are solely those of the authors and do not reflect those of the US Army, the United States Military Academy, or the US Department of Defense.
}
}
\date{}
\begin{document}
\maketitle
\thispagestyle{empty}

\begin{abstract}
Differential sensitivity techniques originally developed to study the robustness of energy landscape controllers are generalized to the important case of closed quantum systems subject to continuously varying controls.  Vanishing sensitivity to parameter variation is shown to coincide with perfect fidelity, as was the case for time-invariant controls. Upper bounds on the magnitude of the differential sensitivity to any parameter variation are derived based simply on knowledge of the system Hamiltonian and the maximum size of the control inputs.
\end{abstract}

\begin{IEEEkeywords}
Quantum information and control, robust control, time-varying systems
\end{IEEEkeywords}

\section{Introduction} 
\IEEEPARstart{G}{iven} the demands for high-precision in proposed applications of quantum technology, the development of control schemes resistant to parameter uncertainty and external noise is paramount to realizing the benefits of a ``second quantum revolution". It is therefore no surprise that techniques for the synthesis of ``robust" quantum controllers abound in the literature. However, these works differ in their approaches based on the employed standard of robustness. The methodology most similar to classical robust control is that of~\cite{James_2007,wang_2023} implementing $H_\infty$ or Linear Quadratic Gaussian (LQG) methods for linear quantum systems modeled by quantum stochastic differential equations.  However, while these techniques provide guaranteed performance margins, they are restricted to linear quantum systems, mainly optical systems, that undergo continuous measurement. 

In~\cite{Ding2025}, the authors presents a novel approach to robust controller design for open quantum systems. However, the approach targets suppression of the noise inherent in the system-bath interaction and is thus not directly applicable to robustness to parameter uncertainty in closed systems. More germane to closed systems, \cite{Daems_2013} and \cite{Van_Damme_2017} define robustness as minimizing the effect of undesirable parameters on the optimal trajectory of a two-level system to first order, invoking the Pontryagin maximum principle to analytically derive the optimal controls resulting from a constrained optimization problem.  While highly effective in simulation, the requirement to explicitly solve for robust pulses in closed form does not scale well beyond two- or three-level systems. 
 
We contribute to the ongoing development of robust quantum control by presenting a procedure to quantify robustness for continuously-varying, smooth controls and applicable to multi-level systems.  Although piecewise constant controls have become a dominant paradigm in quantum control, such controls yield ``solutions" of the Schr\"odinger equation that are piecewise defined and continuous but not differentiable at the switching times. Although further work is needed to explore the advantages of continuously-varying and smooth controls, there are theoretical and practical reasons why smooth controls may be preferable for certain applications. Extending our framework to evaluate robustness to such controls is therefore important. 
As a preview, in Section~\ref{sec: prelims_time_varying}, we generalize the sensitivity analysis of~\cite{o'neil_2024_dynamic_gate_control} and~\cite{o'neil_2023_sensitivity_bounds} to accommodate dynamics governed by continuously varying controls.  As part of this analysis, we introduce the Magnus expansion as a critical tool in the analysis of systems controlled by pulses that depart from the piecewise constant paradigm. After providing conditions for vanishing sensitivity (optimal robustness) in Section~\ref{sec: vanishing_sensitivity}, we develop a bound for the sensitivity to any physically-realizable and norm-bounded uncertainty in Section~\ref{sec: sensitivity_bound}. Finally, Section~\ref{sec: examples} presents two quantum technology-inspired case studies, based on state transfer in a spintronic network and the implementation of a quantum NOT gate.

\section{Preliminaries}\label{sec: prelims_time_varying}

\subsection{System Dynamics and the Magnus Expansion}

We consider closed quantum systems with Hilbert space of finite dimension $N$ whose evolution is given by the time-dependent Schr\"odinger equation 
\begin{equation}\label{eq: schrodinger_equation}
   \imath \hbar \dot{\mathcal{U}}(t) = H(t) \mathcal{U}(t), \quad \mathcal{U}(0) = I.
\end{equation}
$\mathcal{U}(t)$ is an $N \times N$ unitary matrix, $I$ is the identity operator on $\mathbb{C}^N$, and $H(t)$ is a time-dependent Hermitian $N \times N$ matrix.  We restrict attention to Hamiltonians that are a linear combination of a time-invariant drift term $H_0$ and constant control matrices $\set{H_m}$ for $m \in \set{1,\hdots,M}$ modulated by scalar fields $u_m(t)$, so that the Hamiltonian takes the form 
\begin{equation}\label{eq: hamiltonian}
    H(t) = H_0 + \sum_{m=1}^M u_m(t) H_m. 
\end{equation}
For simplicity of notation, we set $A(t):=-\frac{\imath}{\hslash} H(t) = B + \sum_{m=1}^M u_m(t)C_m$ with $B := -\frac{\imath}{\hslash} H_0$ and $C_m := -\frac{\imath}{\hslash} H_m$.

If the controls $u_m(t)$ in \eqref{eq: hamiltonian} are time-invariant then the exact solution of \eqref{eq: schrodinger_equation} is given by the matrix exponential $U(t)=\exp(t A)$.  In practice, constant controls are generally not sufficient to find solutions for most control problems.  One way this problem can be dealt with is by allowing piecewise constant controls, e.g., by sub-dividing the time interval $[0,t_f]$ into subintervals $I_k = [t_k,t_{k+1})$ during which the controls are constant.  This increases the degrees of freedom of the controls while still allowing exact solutions to be constructed from products of matrix exponentials.  This method has proved powerful and become extremely popular in recent years, but it has certain drawbacks.  As the dynamical generator \eqref{eq: hamiltonian} is discontinuous, the resulting piecewise constructed solutions of \eqref{eq: schrodinger_equation} are continuous but not differentiable at the switching times $t_k$.  For many applications this may be acceptable but for some applications it may be desirable for the solutions to be smooth, which necessitates controls that are at least continuous and perhaps differentiable. Unfortunately, as soon as the controls are no longer (piecewise) constant, the solution of \eqref{eq: schrodinger_equation} becomes more complicated as the straightforward generalization of the matrix exponential, $\mathcal{U}(t) = \exp\left(\int_{0}^t A(s) ds \right)$, only applies if the matrices in $\set{B,C_1,\hdots,C_M}$ all commute, a case that is usually not interesting from a control perspective.  In general, more sophisticated techniques are necessary to construct approximate solutions.

\subsection{Approximate Solutions Given by the Magnus Expansion}

A method with the desirable property that (approximate) solutions of \eqref{eq: schrodinger_equation} remain in the unitary group is the Magnus expansion~\cite{magnus_1952_magnus_expansion} 
\begin{equation}
    \mathcal{U}(t) = \exp\left( \sum_{j=1}^\infty \Omega_j(t) \right),
\end{equation}
where the $\Omega_j(t)$ are given by nested commutators and integrals: $\Omega_1(t) = \int_{0}^{t} \! A(t_1) dt_1$, $\Omega_2(t) = \tfrac{1}{2} \int_0^t \int_0^{t_1} [A(t_1),A(t_2)]d t_2 d t_1$, $\Omega_3 =\tfrac{1}{6} \int_0^t \int_0^{t_1} \int_0^{t_2}( [A(t_1), [A(t_2),A(t_3)]] + [A(t_3),[A(t_2), A(t_1)]]) \, d t_3 d t_2 d t_1 $, and more complex expressions as $k$ increases as detailed in~\cite{blanes_2009_magnus_expansion_and_applications, blanes_2010_magus_pedagogical}. To ensure convergence of the Magnus expansion~\cite{casas_2007_magnus_convergence}, the integration domain must be subdivided, and a step size $h$ must be chosen such that on each sub-interval $[s,s+h)$ for $0\le s \le t_f-h$, 
 \begin{equation}\label{eq: step_size_constraint}
     \int_{s}^{s+h} \| A(t) \|dt < \pi,
 \end{equation}
where $\| \cdot \|$ is the spectral norm of the matrix $A(t)$.  Variable step sizes may be used, but we choose a constant step size $h$ that meets the condition in~\eqref{eq: step_size_constraint} based on the norm of the nominal Hamiltonian matrices and bounds on the control amplitudes and the desired measurement time $t_f$, ensuring that $t_f = \tau h$, where $\tau$ is an integer.  The approximate solution $U(t_f)$ can then be expressed as the ordered product
\begin{equation}\label{eq: product_solution}
    U(t_f) = U^{(\tau,\tau-1)} \hdots U^{(1,0)} = \prod_{k=1}^\tau U^{(k,k-1)} =: U^{(\tau,0)} 
\end{equation}
with $t_k = k h$ for $k \in \set{0,1,\hdots \tau}$. This is similar to the solution for piecewise constant controls, and the general approximation for the propagator over the interval $[t_{\tau_1},t_{\tau_2})$ is
\begin{equation*}
    U^{(\tau_2,\tau_1)} := U^{(\tau_2,\tau_2 -1)} \hdots U^{(\tau_1+1,\tau_1)} = \prod_{k = \tau_1+1}^{\tau_2} U^{(k,k-1)}.
\end{equation*}
Given the complexity of the higher order terms in the Magnus expansion and intractability of providing closed-form expressions for the $\Omega_j(t)$, typical application of the Magnus approach requires truncation of the series at a given order.  Numerous methods of various orders have been derived \cite{blanes_2009_magnus_expansion_and_applications}.   For the numerical examples in Section \ref{sec: examples}, we adopt a fourth-order method to approximate the propagator over the time interval $[t_{k-1},t_k)$ for each $k$ from $1$ to $\tau$ as $U^{(k,k-1)} := \exp(\Gamma^{(k)})$. Following~\cite{casas_2007_magnus_convergence,rasulov_2023_beyond_piecewise} the approximate dynamical generator is defined as $\Gamma^{(k)}:=$
\begin{equation}
    \frac{h}{6} \left(A^{(k-1)} + 4A^{(k-1/2)} + A^{(k)} \right) - \frac{h^2}{12} [A^{(k-1)},A^{(k)}],
\end{equation}
where $A^{(k)}:=A(t_k)$, ${A}^{(k+1/2)}:=A(t_k+h/2)$.  As this method is fourth-order, the error of the exact propagator denoted $\mathcal{U}^{(k,k-1)}$ over the time-step $(k-1)h$ to $kh$ compared to the truncated estimate, $ \| \mathcal{U}^{(k,k-1)} - U^{(k,k-1)} \|$, is $\mathcal{O}(h^5)$ \cite{blanes_2009_magnus_expansion_and_applications, casas_2007_magnus_convergence, casas_2006, Iserles1999, Iserles1999_on_implementation_of_magnus}.

\subsection{Fidelity and Performance}

The time-varying controls $\set{u_m(t)}_{m=1}^M$ are synthesized to maximize a performance metric, usually the expectation of an observable or the probability that the system will be found in a given target state.  We generally subsume these into a fidelity measure that is a function of the unitary propagator at the time $t_f$ and denoted as $\mathsf{F}[U(t_f)]$. A typical fidelity measure for state transfer to a given target state is
\begin{equation}\label{eq: state_fideltiy}
    \mathsf{F}_S[U(t_f)] = |\bra{\psi_f} U(t_f) \ket{\psi_0} |^2
\end{equation}
where $\ket{\psi_0}$ is the initial state and $\ket{\psi_f}$ the target state.  For gate operation a typical performance measure is the gate fidelity~\cite{machnes_2011_comparing_benchmarking}
\begin{equation}\label{eq: gate_fidelity}
    \mathsf{F}_G[U(t_f)] = \frac{1}{N} \left| \tr\left( U_f^\dagger U(t_f) \right) \right|
\end{equation}
where $U_f$ is the target unitary process and $\tr$ denotes the matrix trace. In what follows, we omit explicit dependence of $U(t_f)$ on the measurement time $t_f$ with the understanding that $U$ denotes the nominal approximate propagator of~\eqref{eq: product_solution} at time $t_f$.  We also excise the dependence of the fidelity on $U(t_f)$ so that $\mathsf{F}_{\alpha}$ with $\alpha \in\set{S,G}$ denotes the nominal fidelity of~\eqref{eq: state_fideltiy} for states $\{S\}$ or~\eqref{eq: gate_fidelity} for gates $\{G\}$. 

\subsection{First-Order Sensitivity of the Performance}\label{ssec: differential_sensitivity}

Consider the variation in the performance measure $\mathsf{F}_{\alpha}$ due to an uncertainty in one of the matrices $\set{B,C_m}$ modeled as $\delta_\mu S_\mu$ where $\delta_\mu \in \mathbb{R}$ is a scalar uncertainty weight and $S_\mu \in \mathbb{\mathbb{C}}^{N \times N}$ is a skew-Hermitian matrix that encodes the structure of the uncertainty. Denoting the uncertain realization of $A(t)$ by $\tilde{A}(t)$ and the corresponding uncertain propagator by $\tilde{U}(t_f)$, the change in the fidelity due to an uncertainty indexed by $\mu$ is
\begin{equation}
    \mathsf{\tilde{F}} = \mathsf{F} + \left( \partial_\mu \mathsf{\tilde{F}}  \right) \delta_\mu + \mathcal{O}(\delta_\mu^2)
\end{equation}
where $\partial_\mu$ denotes the derivative of $\mathsf{\tilde{F}}$ with respect to the amplitude of the structured perturbation $\delta_\mu S_\mu$. As we wish to analyze robustness to parametric uncertainty, we focus on the first order sensitivity of the fidelity with respect to the structure $S_\mu$ given by $\partial_\mu \tilde{\mathsf{F}}$. The product solution of~\eqref{eq: product_solution} leads to
\begin{equation}\label{eq: state_sensitivity}
    \partial_\mu \tilde{\mathsf{F}}_S  = \sum_{k=1}^\tau 2 \Re\left\{ \bra{\psi_f} X_\mu^{(k)} \ket{\psi_0} \bra{\psi_0} U^\dagger \ket{\psi_f} \right\}
\end{equation}
for the state transfer case and 
\begin{equation}\label{eq: gate_sensitivity}
    \partial_\mu \tilde{\mathsf{F}}_G = \frac{1}{N^2 \mathsf{F}_G}\sum_{k=1}^\tau  \Re \left\{ \tr\left( U_f^\dagger X_\mu^{(k)} \right) \tr \left( U^\dagger U_f \right)  \right\}
\end{equation}
for the gate fidelity problem. In both cases, 
\begin{equation}\label{eq: X_k}
    X_\mu^{(k)} = U^{(\tau,k)} \left( \partial_\mu \tilde{U}^{(k,k-1)} \right) U^{(k-1,0)}.
\end{equation}
The derivative term is given by~\cite{najfeld_1995_dexpma,floether_2012} $\partial_\mu U^{(k,k-1)} =$
\begin{equation}\label{eq: dexmpa} 
     \int_{0}^1 \exp \left( \Gamma^{(k)} (1-s) \right) \left( \partial_\mu \Gamma^{(k)} \right) \exp \left( \Gamma^{(k)} s\right) ds. 
\end{equation}
The above holds for any uncertainty structured as $S_\mu$ that retains the skew-Hermitian property of $A(t)$.  However, for concreteness we explicitly consider collective uncertainty to one of the operators in $\set{B,C_m}$, which captures the two main cases of the uncertainty affecting the system evolution and the control interaction, respectively.  Setting $S_0 = B/\| B\|_F$ and $S_\mu = C_m/\| C_m\|_F$ for $(m=\mu)$ running from $1$ to $M$ and where $\| \cdot \|_F$ is the Frobenius norm, the uncertain, time-varying dynamics matrix $A(t)$ is now
\begin{equation}
    \tilde{A}(t) = (B +  \delta_0 S_0)+ \sum_{m=1}^M u_m(t) C_m 
\end{equation}
for the case of collective uncertainty in $H_0$ or
\begin{equation}\label{eq:S_mu_def}
    \tilde{A}(t) = B + \left(\sum_{m=1}^M u_m(t) C_m + \delta_\mu u_\mu(t)S_\mu\right)
\end{equation}
for the case of collective uncertainty in interaction Hamiltonian $H_\mu$.   As we restrict the analysis to collective uncertainty, $\partial_\mu \Gamma^{(k)}$ takes two distinct forms:
\begin{equation}\label{eq: drift_uncertainty}
    \partial_0 \Gamma^{(k)} = hS_0 + \frac{h^2}{12} \sum_{m=1}^M \left( (u_m^{(k)} - u_m^{(k-1)}) [S_0,C_m] \right)
\end{equation}
for drift uncertainty and 
\begin{multline}\label{eq: control_uncertainty}
    \partial_\mu \Gamma^{(k)} = \frac{h}{6} \left(u_\mu^{(k)} +4u^{(k-1/2)}_{\mu} + u^{(k-1)}_{\mu} \right) S_\mu \\ + \frac{h^2}{12} \left( (u^{(k)}_{\mu} - u^{(k-1)}_{\mu}) [S_\mu,B] \right) - \frac{h^2}{12} \sum_{m \neq \mu} J^{(k)}_{(\mu,m)}[S_\mu,C_m]
\end{multline}
for uncertainty in one of the interaction operators. Here we define $u_m^{(k)}:= u_m(kh)$, and $J^{(k)}_{(\mu,m)}$ is given by $J^{(k)}_{(\mu,m)}:= u_{\mu}^{(k-1)}u_{m}^{(k)} - u^{(k)}_{\mu}u^{(k-1)}_{m}$.  As each matrix in $\set{B,C_m}$ and the associated normalized structures $S_\mu$ are skew-Hermitian, each matrix $\partial_\mu \Gamma^{(k)}$ is in the Lie algebra $\mathfrak{u}(N)$ as the algebra is closed under commutation. Considering the fidelity error, or infidelity, defined by $1-\tilde{\mathsf{F}}_\alpha$, the first-order variation in the error is the sensitivity in~\eqref{eq: state_sensitivity} or~\eqref{eq: gate_sensitivity} multiplied by $-1$. 

\section{Vanishing Sensitivity}\label{sec: vanishing_sensitivity}

As with time-invariant controls, we can provide sufficient conditions for vanishing sensitivity for both state transfer and gate implementation cases.
\begin{theorem}{\cite[Th. 21.7]{Jonckheere1997}}\label{theorem: vanishing}
If the fidelity map $\mathsf{F}: U(N) \rightarrow [0,1]$ is differentiable except possibly at some isolated points, then $\forall U \in \mathsf{F}^{-1}(1)$ where the differential exists, $d_U\mathsf{F}=0$. 
\end{theorem}
Locally charting $U(N)$ with $\mathfrak{u}(N)$, the Lie algebra of skew-Hermitian matrices, using the exponential map, and viewing the physically relevant skew-hermitian $S_\mu$ as providing the direction along which the sensitivity is computed, this sensitivity is $d_{{A}} \mathsf{F} \circ \mathrm{exp}(S_\mu)$   
as seen from~\eqref{eq:S_mu_def}. 
\begin{corollary}{\cite[Th. 3]{schirmer_2017_feedback_laws_spintronics}}\label{theorem: vanishing_state_gate_sensitivity}
If a controller induces perfect state transfer in the sense that the nominal fidelity either $\mathsf{F}_S = 1$ or $\mathsf{F}_G= 1$, the differential sensitivity (first-order variation of the fidelity) with respect to any physically realizable, and hence skew-Hermitian, uncertainty $S_\mu$ is zero. 
\end{corollary}

\section{Bounding the Differential Sensitivity to Arbitrary Perturbations}\label{sec: sensitivity_bound}

At the other end of the spectrum, we consider upper bounds on the potential size of the differential sensitivity for a given controller for a set of allowable uncertainties or perturbations to the dynamics. To better analyze the relationship between a set of possible perturbation structures, represented by the family $\set{\partial_\mu \Gamma ^{(k)}}$, and the resulting sensitivity, we seek an expression such that the perturbation structure encoded in $\partial_\mu \Gamma^{(k)}$ may be viewed as the input to a controller-specific operator that yields the resulting sensitivity to a given structure. Central to this derivation is the controller-specific mapping
\begin{align*}
\mathcal{D}^{(k)}: \mathfrak{u}(N) \rightarrow \mathfrak{u}(N), && \partial_\mu \Gamma^{(k)} & \mapsto {U^{(k,k-1)}}^\dagger \partial_\mu U^{(k,k-1)}
\end{align*}
where the action of $\mathcal{D}^{(k)}$ on $ \partial_\mu \Gamma^{(k)}$ is explicitly given by $\int_{0}^1 \exp \left( -s \Gamma^{(k)} \right) \left( \partial_\mu \Gamma^{(k)} \right) \exp \left( s\Gamma^{(k)} \right) ds$. The integrand is the adjoint representation of the $U(N)$-group element $\exp\left( -s\Gamma^{(k)}\right)$ acting on the element $\partial_\mu \Gamma^{(k)}$ of its $\mathfrak{u}(N)$-Lie algebra.   To assist in the following manipulations, we define $\mathsf{D}^{(k)}$ as the matrix representation of the $\mathcal{D}^{(k)}$ super-operator, obtained by expanding $\partial_\mu \Gamma^{(k)}$ in a basis $\set{\sigma_n}_{n=1}^{N^2}$ for $\mathfrak{u}(N)$. 

Before proceeding, note that for the case of state transfer we have $\braket{\psi_f | X_\mu^{(k)} | \psi_0} = \tr \left( \mathcal{D}^{(k)} \left( \partial_\mu \Gamma^{(k)} \right) P^{(k)} \right)$ with $P^{(k)} = U^{(k-1,0)} \ket{\psi_0} \bra{\psi_f} U^{(\tau,k-1)} := \ket{\alpha^{(k)}} \bra{\beta^{(k)}}$ for $k>1$ and $P^{(1)}:= \ket{\psi_0}\bra{\psi_f}U$. Likewise, for gate operation we have that $\tr\left( U_f^\dagger X_\mu^{(k)}\right) = \tr\left( \mathcal{D}^{(k)} \left( \partial_\mu \Gamma^{(k)}\right) P^{(k)} \right)$ with $P^{(k)} = U^{(k-1,0)} U_f^\dagger U^{(\tau,k-1)}$. We may then unify the expression of the sensitivity of the fidelity for both cases as 
\begin{equation}\label{eq: sensitivity_state_compact}
    \partial_\mu \tilde{\mathsf{F}}_{\alpha} = \gamma \sum_{k = 1}^\tau \Re \left\{ \tr \left( \mathcal{D}^{(k)} \left( \partial_\mu \Gamma^{(k)} \right) P^{(k)} z^* \right) \right\}
\end{equation}
where $z^* = \bra{\psi_0} U^\dagger \ket{\psi_f}$ and $\gamma = 2$ for state transfer (see~\eqref{eq: state_sensitivity}) and $z^* = \tr(U^\dagger U_f)$ and $\gamma=1/(N^2 \mathsf{F}_G)$ for the gate fidelity problem (see~\eqref{eq: gate_sensitivity}). We then have the following proposition. 
\begin{prop}
    The Hermitian part of $P^{(k)}z^*$ makes no contribution to the sensitivity of the fidelity. 
\end{prop}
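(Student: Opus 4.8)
The plan is to exploit the skew-Hermitian structure already established in the proofs of Theorems~\ref{theorem: vanishing_state_sensitivity} and~\ref{theorem: vanishing_gate_sensitivity}, namely that $G^{(k)} := \mathcal{D}^{(k)}\big(\partial_\mu \Gamma^{(k)}\big) = {U^{(k,k-1)}}^\dagger \partial_\mu U^{(k,k-1)}$ lies in $\mathfrak{u}(N)$ and is therefore skew-Hermitian, ${G^{(k)}}^\dagger = -G^{(k)}$. Writing $M^{(k)} := P^{(k)} z^*$ and decomposing it into Hermitian and skew-Hermitian parts, $M^{(k)} = M_H^{(k)} + M_A^{(k)}$ with $M_H^{(k)} = \tfrac{1}{2}\big(M^{(k)} + {M^{(k)}}^\dagger\big)$, linearity of the trace splits each summand of~\eqref{eq: sensitivity_state_compact}, so it suffices to show $\Re\big\{\tr\big(G^{(k)} M_H^{(k)}\big)\big\} = 0$ for every $k$.

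The key step is the elementary identity that the trace of a skew-Hermitian matrix times a Hermitian matrix is purely imaginary. Using $\overline{\tr(X)} = \tr(X^\dagger)$ together with cyclicity, I would compute $\overline{\tr\big(G^{(k)} M_H^{(k)}\big)} = \tr\big({M_H^{(k)}}^\dagger {G^{(k)}}^\dagger\big) = \tr\big(M_H^{(k)}(-G^{(k)})\big) = -\tr\big(G^{(k)} M_H^{(k)}\big)$, where I have substituted ${M_H^{(k)}}^\dagger = M_H^{(k)}$ and ${G^{(k)}}^\dagger = -G^{(k)}$. Hence $\tr\big(G^{(k)} M_H^{(k)}\big)$ equals the negative of its own complex conjugate, so its real part vanishes.

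Since this holds term by term, only the skew-Hermitian part $M_A^{(k)}$ of $P^{(k)} z^*$ survives in the sum~\eqref{eq: sensitivity_state_compact}, and the Hermitian part contributes nothing to the sensitivity. I do not anticipate a genuine obstacle here: the entire argument rests on the skew-Hermiticity of $G^{(k)}$, which is already guaranteed by the dexpma representation~\eqref{eq: dexmpa} and the closure of $\mathfrak{u}(N)$ under conjugation by unitary group elements invoked in the vanishing-sensitivity proofs. The only point requiring minor care is the bookkeeping with complex conjugates when passing between $\Re\{\cdot\}$ and the adjoint, which the identity $\overline{\tr(X)} = \tr(X^\dagger)$ handles cleanly.
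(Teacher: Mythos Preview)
Your proposal is correct and follows essentially the same approach as the paper: both arguments hinge on the fact that $\mathcal{D}^{(k)}(\partial_\mu \Gamma^{(k)})$ is skew-Hermitian, decompose $P^{(k)}z^*$ into Hermitian and skew-Hermitian parts, and then use that $\Re\{\tr(XY_H)\}=0$ when $X$ is skew-Hermitian and $Y_H$ Hermitian. You spell out the last step via $\overline{\tr(X)}=\tr(X^\dagger)$ more explicitly than the paper does, but the logic is identical.
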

\begin{proof}
    For brevity write $\mathcal{D}^{(k)} \left( \partial_\mu \Gamma^{(k)}\right)$ as $X$ and $P^{(k)}z^*$ as $Y$. By the properties of the adjoint action of a Lie group element on its Lie algebra, $X$ is in $\mathfrak{u}(N)$ and so skew-Hermitian. Decompose $Y$ as $Y = Y_H + Y_{SH}$ where $Y_H$ is Hermitian and $Y_{SH}$ is skew-Hermitian. It then follows that $\Re\left\{ \tr \left( XY\right) \right\} = \Re \left\{ \tr \left(  X Y_{SH} \right) \right\}$ as $\Re \left\{ \tr \left( X Y_H\right) \right\}=0$.  
\end{proof}
It then suffices to expand $\left(P^{(k)}z^*\right)_{SH}$ in the same basis of $\mathfrak{u}(N)$. Specifically we have $\left(P^{(k)}z^*\right)_{SH} = \sum_{n=1}^{N^2} \mathsf{p}_n^{(k)} \sigma_n$ where $\mathsf{p}_n^{(k)} = \Re \left\{ \tr \left( P^{(k)} z^* \sigma_n^{\dagger} \right) \right\}$. Similarly, we expand $\partial_\mu \Gamma^{(k)}$ as $\sum_{n=1}^{N^2} \left( \mathsf{g}_\mu^{(k)} \right)_n \sigma_n$. We then represent $P^{(k)}z^*$ and $\partial_\mu \Gamma^{(k)}$ by the real $N^2$ vectors consisting of the expansion coefficients. To generate a matrix representation of the superoperator $\mathcal{D}^{(k)}$ we have the $N^2 \times N^2$ matrix $\mathsf{D}^{(k)}$ with elements given as 
\begin{equation}
    \mathsf{D}^{(k)}_{m n} = \int_0^1 \left( \tr \left( e^{-\Gamma^{(k)}s} \sigma_n e^{\Gamma^{(k)}s} \sigma_m^\dagger \right) \right)ds.
\end{equation}

Taking the spectral decomposition of $\Gamma^{(k)}$ as $\sum_{p=1}^{N} \Pi^{(k)}_p i \lambda^{(k)}_p$, where $i\lambda^{(k)}_p$ is the $p$th eigenvalue of $\Gamma^{(k)}$ with associated one-dimensional projector $\Pi^{(k)}_p$, let $\omega^{(k)}_{pq} := \lambda^{(k)}_p - \lambda^{(k)}_q$.  Then
\begin{multline}
 \int_0^1 e^{i(\lambda_p -\lambda_q)s} \, ds = \left. \frac{e^{i (\lambda_p - \lambda_q)s}}{i (\lambda_p - \lambda_q)} \right|_0^1 = \frac{e^{i\omega_{pq}}-1}{i\omega_{pq}} \\ = e^{i\omega_{pq}/2} \frac{e^{i\omega_{pq}/2}-e^{-i\omega_{pq}/2}}{i\omega_{pq}} = e^{i\omega_{pq}/2}\sinc(\omega_{pq}/2)
\end{multline}
shows that the elements of the matrix representation of the $\mathcal{D}^{(k)}$ super-operator may be also written as 
\begin{equation*}
    \mathsf{D}^{(k)}_{mn} = \sum_{p,q = 1}^{N} \tr \left( \Pi^{(k)}_q \sigma_n \Pi^{(k)}_p \sigma_m^\dagger \right) e^{i \omega^{(k)}_{pq}/2} \sinc \left( \frac{\omega^{(k)}_{pq}}{2} \right). 
\end{equation*}
In either case, we now write the expression for the sensitivity of the fidelity error as the real matrix-vector product 
\begin{equation}
    \partial_\mu \tilde{\mathsf{F}}_{\alpha} = \gamma \sum_{k=1}^\tau {\mathsf{p}^{(k)}}^T \mathsf{D}^{(k)} \mathsf{g}_\mu^{(k)} = \gamma \sum_{k=1}^{\tau} \mathsf{z}^{(k)} \mathsf{g}_\mu^{(k)}
\end{equation}
where $\mathsf{z}^{(k)}$ is an $N^2$-dimensional, real row vector given by ${\mathsf{p}^{(k)}}^T \mathsf{D}^{(k)}$. Noting that $\| \mathsf{g}_\mu^{(k)} \| = \| \partial_\mu \Gamma^{(k)}\|_F$ we can then bound the sensitivity with respect to any norm bounded perturbation structure $\| \partial_\mu \Gamma^{(k) }\|_F \leq b$ in accordance with the theorem below.
\begin{theorem}\label{theorem: upper_bound_on_sensitivity}
For a norm bounded perturbation such that $\| \partial_\mu \Gamma^{(k)} \|_F = \| \mathsf{g}_\mu^{(k)}\| \leq b$ for all time steps $k$, the magnitude of the differential sensitivity to this perturbation structure is bounded above by $     \left| \partial_\mu \tilde{\mathsf{F}}_{\alpha} \right| \leq \gamma \| \mathbf{Z} \|_1 b =: \beta$ where $\mathbf{Z}$ is the $N^2$-dimensional vector with $k$th component given as $\mathsf{Z}_k = \| \mathsf{z}^{(k)}\| = \|{\mathsf{p}^{(k)}}^T \mathsf{D}^{(k)} \|$. 
\end{theorem}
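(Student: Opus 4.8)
The plan is to treat this as an essentially immediate consequence of the real matrix–vector expression for the sensitivity derived immediately above, namely $\partial_\mu \tilde{\mathsf{F}}_{\set{S,G}} = \gamma \sum_{k=1}^{\tau} \mathsf{z}^{(k)} \mathsf{g}_\mu^{(k)}$, and to reduce the claim to two elementary inequalities. Since $\gamma > 0$ in both cases ($\gamma = 2$ for state transfer and $\gamma = 1/(N^2\mathsf{F}_G)$ for gates), I would first pass to absolute values and apply the triangle inequality to the sum over $k$, obtaining $\left| \partial_\mu \tilde{\mathsf{F}}_{\set{S,G}} \right| \leq \gamma \sum_{k=1}^{\tau} \left| \mathsf{z}^{(k)} \mathsf{g}_\mu^{(k)} \right|$. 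Each summand is a real Euclidean inner product of the row vector $\mathsf{z}^{(k)}$ with the coefficient vector $\mathsf{g}_\mu^{(k)}$, so Cauchy–Schwarz bounds it by $\| \mathsf{z}^{(k)} \| \, \| \mathsf{g}_\mu^{(k)} \| = \mathbf{Z}_k \, \| \mathsf{g}_\mu^{(k)} \|$, where $\mathbf{Z}_k := \| \mathsf{z}^{(k)} \|$ is exactly the $k$th component of $\mathbf{Z}$.

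The second ingredient is to replace $\| \mathsf{g}_\mu^{(k)} \|$ by the bound $b$. Here I would invoke the identity $\| \mathsf{g}_\mu^{(k)} \| = \| \partial_\mu \Gamma^{(k)} \|_F$, which is simply Parseval's relation for the expansion of $\partial_\mu \Gamma^{(k)}$ in the basis $\set{\sigma_n}$. This is the one step that requires care: it relies on the basis being orthonormal with respect to the Hilbert–Schmidt inner product $\tr(\sigma_m \sigma_n^\dagger) = \delta_{mn}$, the same orthonormality that makes the coefficient-extraction formula $\mathsf{p}_n^{(k)} = \Re\{ \tr(P^{(k)} z^* \sigma_n^\dagger) \}$ valid. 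Invoking the hypothesis $\| \partial_\mu \Gamma^{(k)} \|_F \leq b$ for every time step then bounds each summand by $\mathbf{Z}_k \, b$.

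Summing over $k$ yields $\left| \partial_\mu \tilde{\mathsf{F}}_{\set{S,G}} \right| \leq \gamma b \sum_{k=1}^{\tau} \mathbf{Z}_k$, and the final observation is that, because every component $\mathbf{Z}_k = \| \mathsf{z}^{(k)} \|$ is nonnegative, $\sum_{k=1}^{\tau} \mathbf{Z}_k = \| \mathbf{Z} \|_1$, giving the claimed bound $\gamma \| \mathbf{Z} \|_1 b =: \beta$. I do not anticipate a genuine obstacle, since all the substantive content has been front-loaded into the derivation of the matrix–vector form; the proof proper is triangle inequality plus one application of Cauchy–Schwarz per time step. The only subtlety worth flagging is that stopping at $\| \mathbf{Z} \|_1$ — rather than applying Cauchy–Schwarz a second time against the all-ones vector to produce $\sqrt{\tau}\, \| \mathbf{Z} \|_2$ — yields a strictly tighter estimate, since $\| \mathbf{Z} \|_1 \leq \sqrt{\tau}\, \| \mathbf{Z} \|_2$.
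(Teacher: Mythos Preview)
Your proposal is correct and follows essentially the same approach as the paper's proof: triangle inequality on the sum, Cauchy--Schwarz on each term, then the uniform bound $\|\mathsf{g}_\mu^{(k)}\|\le b$ to collapse everything to $\gamma\|\mathbf{Z}\|_1 b$. Your additional remarks on the orthonormality of $\{\sigma_n\}$ underlying $\|\mathsf{g}_\mu^{(k)}\|=\|\partial_\mu\Gamma^{(k)}\|_F$ and on the $\ell_1$--vs--$\sqrt{\tau}\,\ell_2$ comparison are accurate but go slightly beyond what the paper spells out.
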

\begin{proof}
The proof follows from the exposition above, which gives $\partial_\mu \tilde{\mathsf{F}}_{\alpha} = \gamma \sum_{k=1}^\tau \mathsf{z}^{(k)} \mathsf{g}_\mu^{(k)}$. Taking absolute values on both sides of the equation and employing the triangle equality to the sum yields $| \partial_\mu \mathsf{F}_{\alpha} | \leq \gamma \sum_{k=1}^\tau \left| \mathsf{z}^{(k)} \mathsf{g}_\mu^{(k)} \right|$. Applying Cauchy-Schwarz to the inner product of $\mathsf{z}^{(k)}$ with $\mathsf{g}_\mu^{(k)}$ for each $k$ and enforcing the bound on $ \| \mathsf{g}_\mu^{(k)}\| \leq b$ we have $\left| \partial_\mu \tilde{\mathsf{F}}_{\alpha} \right| \leq \gamma \left( \sum_{k=1}^\tau \| \mathsf{z}^{(k)} \| \right) b = \gamma \| \mathbf{Z}\|_1 b $.
\end{proof}

This construction allows us to consider the controller and the input-output state information encoded in the sequence $\mathsf{z}^{(k)}$ separately from the perturbation-specific effects encoded in $\partial_\mu \Gamma^{(k)}$, and efficiently evaluate the worst-case sensitivity of a given controller for a collection of perturbations $\set{\partial_\mu \Gamma^{(k)}}_\mu$. Instead of bounding the sensitivity of the fidelity for a given controller with regard to certain perturbations, one can alternatively derive lower bounds on the fidelity considering a range of perturbations, an approach pursued in~\cite[Th. 1]{Kosut2025}.  Both approaches can be seen as complementary.

\section{Applications}~\label{sec: examples}

We now apply results of Section~\ref{sec: sensitivity_bound} to two different quantum technology platforms subject to distinct time-varying control protocols. We do this to illustrate that the results are not dependent upon a specific application or control pulses synthesized from a specific set of basis functions. 

\subsection{State Transfer in Spin Ring}\label{ssec: state_transfer}

We first consider four coupled spin-$1/2$ particles with ring topology as introduced in~\cite{schirmer_2017_feedback_laws_spintronics}. But instead of designing controllers based on static fields, we permit time-varying control fields.  Restricting the dynamics to the single-excitation subspace, and defining $E_{mn}$ as the $4 \times 4$ matrix with a $1$ in the $(m,n)$th position and $0$ elsewhere, the drift Hamiltonian has the explicit form $H_0 = 
\sum_{n=1}^3 \left( E_{n,n+1} + E_{n+1,n} \right) + E_{4,1} + E_{1,4}$.  We choose two control Hamiltonians. $H_1 = E_{1,1}$ and $H_2 = E_{2,2}$, which suffices to ensure full controllability in accordance with the Lie algebra rank condition~\cite{d'alessandro_intro_quantum_control}. The control objective is to maximize the state transfer fidelity of~\eqref{eq: state_fideltiy} with initial state $\ket{\psi_0} = e_1$ and target $\ket{\psi_f} = e_4$ where $\set{e_k}_{k=1}^4$ are the natural basis vectors for $\mathbb{C}^4$.  

There are many bases for functions defined on $[0,t_f]$ that generate acceptable controllers, but for illustration, we choose to build the scalar control fields $u_1(t)$ and $u_2(t)$ from a basis set of cubic polynomial splines~\cite{mcclarren_2018_cubic_splines}.  Dividing the time interval $[0,t_f]$ into $K$ equal intervals, we assume controls of the form
\begin{equation*}
    u_m(x) = \sum_{n=1}^K \left( a^{(n)}_m x^3 + b^{(n)}_m x^2 + c^{(n)}_m x + d^{(n)}_m \right) I(n)
\end{equation*}
where $0 \le x \le t_f/K$ for each $n$ and $I(n)=1$ for $(n-1)t_f/K \le t \le nt_f/K$ and zero otherwise. The vector of parameters for the optimization is the $4KM$-dimensional vector $\mathbf{x} = \left(a^{(1)}_1,b^{(1)}_1,c^{(1)}_1,d^{(1)}_1,\hdots,a^{(K)}_M,b^{(K)}_M,c^{(K)}_M,d^{(K)}_M \right)$.  We set $d^{(1)}_m = 0$ for all $m$ to ensure the initial value of $u_m(t)$ is zero and place appropriate constraints on the coefficients at the boundary of each spline to ensure the resulting field is $\mathcal{C}^1$.  Based on trial and error in varying the read-out time $t_f$, number of splines $K$, and bound the on the controls $|u_m(t)|$, we choose $t_f = 5$, $K = 3$, and $|u_m(t)| \leq 5$ which generates a selection of controllers with $\mathsf{F}_S > 0.99$.  
The controls are the solution of minimizing $\mathsf{e}_S = 1-\mathsf{F}_S$ as per equation~\eqref{eq: state_fideltiy} using a constrained quasi-Newton optimization.  In units with $\hbar=1$, $\|B \| = \| -i/\hslash H_0 \| = 2$ and $\| C_m \| = \| -i/ \hslash H_m \| = 1$. Given the bound $\mathcal{B}=5$ on the control amplitudes, we employ~\eqref{eq: step_size_constraint} to obtain a very conservative estimate of $\kappa=50$ for the integration step size that ensures convergence. This corresponds to a time step of $h = 1/30$ and $\tau = K \times \kappa = 150$ steps on the interval $[0,t_f]$. 

\begin{figure}[t]
\centering
{\includegraphics[width = 0.85\columnwidth]{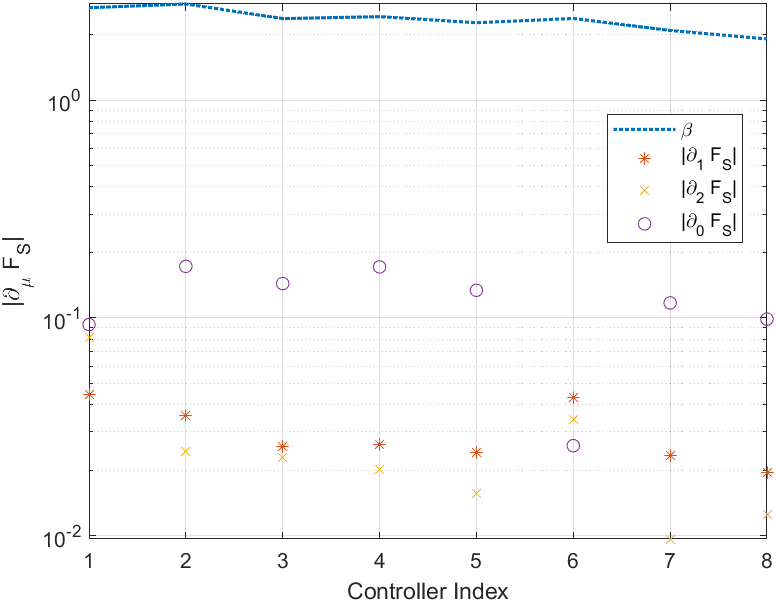}}
\caption{Plot comparing $|\partial_\mu \mathsf{F}_S |$ for $\mu=\set{0,1,2}$ with $\beta$ from Theorem~\ref{theorem: upper_bound_on_sensitivity}. Though computed only from the nominal system and control data, the bound has an accuracy to within slightly more than one order of magnitude. The controllers are ordered by increasing fidelity so that the controller with index 1 yields the smallest fidelity and that with index 8 yields the greatest.}   
\label{fig: N=4_bound}
\end{figure} 
Given a set of high fidelity controllers, the differential sensitivity to collective uncertainty in any of the Hamiltonian matrices is readily computed by application of~\eqref{eq: state_sensitivity} and~\eqref{eq: X_k} to~\eqref{eq: control_uncertainty}. Based on the bound $\mathcal{B}$ used in the synthesis, we compute the maximum norm of the perturbations $\partial_\mu \Gamma^{(k)}$ as $b = \max\limits_{\mu,k} \| \partial_\mu \Gamma^{(k)} \| = 0.1555$. We then test the bound $\beta$ of Theorem~\ref{theorem: upper_bound_on_sensitivity} after computing $\| \mathbf{Z}\|_1$ as derived in Section~\ref{sec: sensitivity_bound}.  Figure~\ref{fig: N=4_bound} shows the result.  Without the need to explicitly compute the derivative of a matrix exponential as in~\eqref{eq: dexmpa}, and based simply on knowledge of the controller and norm of the matrices composing the Hamiltonian, the bound on the worst-case sensitivity is accurate to nearly an order of magnitude. Specifically, for the data in Figure~\ref{fig: N=4_bound}, the mean of the ratio of $\beta$ to $\max\limits_\mu \left| \partial_\mu \mathsf{F}_S\right|$ is $23.1$. 

\subsection{Gate Operation}

To demonstrate that the analysis is applicable beyond the state transfer problem, we consider a quantum gate implementation for the three-level, super-conducting circuit model presented in~\cite{pagano_2024_bases}.  In units with $\hbar = 1$, the time-varying dynamical generator takes the form   
\begin{multline*}
    A(t) = B + u_1(t)C_1 + u_2(t)C_2 = \\ 
    \begin{bmatrix} 0 & 0 & 0 \\ 0 
 & 0 & 0 \\ 0 & 0 & -i \Delta \end{bmatrix} - u_1(t) \begin{bmatrix} 0 & i & 0 \\ i & 0 & i \\ 0 & i & 0 \end{bmatrix} + u_2(t) \begin{bmatrix} 0 & -1 & 0 \\ 1 & 0 & -1 \\ 0 & 1 & 0 \end{bmatrix}
\end{multline*}
where $\Delta$ is the strength of the \textit{anharmonicity} that determines how much the $\ket{2} \rightarrow \ket{3}$ transition differs from the $\ket{1} \rightarrow \ket{2}$ transition.  We use the same target gate as in~\cite{pagano_2024_bases}, a NOT gate with explicit representation 
 \(   U_f = \begin{bmatrix} 0 & 1 & 0 \\ 1 & 0 & 0 \\ 0 & 0 & 1 \end{bmatrix}. \)
To demonstrate applicability to another choice of basis functions, we construct the controls $u_1(t)$ and $u_2(t)$ from a restricted Fourier basis. To ensure the control fields have a realistic initial condition of $u_i(0) = 0$, we cast them as 
\begin{equation}\label{eq: fourier_basis}
    u_m(t) = \sum_{j = 1}^{\mathsf{J}} \alpha_{mj} \sin(\omega_{mj} t)
\end{equation}
and optimize over the $4\mathsf{J}$-dimensional vector of control parameters $\mathbf{x} = \left( \alpha_{m1},\hdots,\alpha_{m\mathsf{J}},\omega_{m1},\hdots,\omega_{m\mathsf{J}} \right)$ for $m=1,2$. After trial and error in varying $t_f$ and the bound on $\set{\alpha_{mj}}$ and $\set{\omega_{mj}}$, we choose $t_f = 2$, $| \alpha_{mj} | \leq 2$ and $0\le \omega_{mj} \leq 10$, which generated a selection of acceptable-fidelity ($\mathsf{F}_G > 0.99$) controllers by minimizing the objective $\mathsf{e}_G = 1-\mathsf{F}_G$ following from~\eqref{eq: gate_fidelity} and employing the same constrained quasi-Newton optimization as before. A step size analysis similar to that of Section~\ref{ssec: state_transfer} yields $\tau = \kappa = 50$ that guarantees convergence. 

\begin{figure}[t]
\centering
{\includegraphics[width = 0.85\columnwidth]{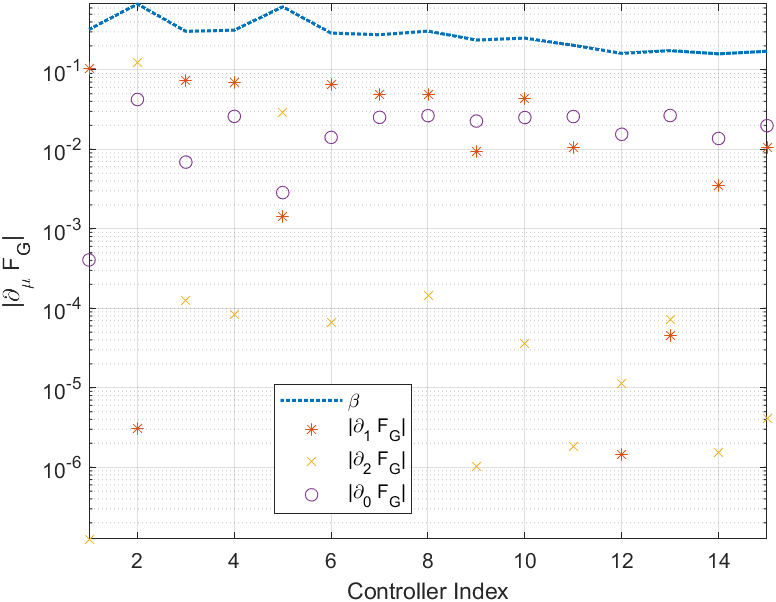}}
\caption{Plot of $\beta$ as given in Theorem~\ref{theorem: upper_bound_on_sensitivity} for the gate synthesis case with $|\partial_\mu \mathsf{F}_S |$ for $\mu=\set{0,1,2}$. The bound $\beta$ predicts the largest sensitivity to collective uncertainty in any of the Hamiltonian matrices to within an order of magnitude. The controllers are ordered by increasing fidelity so that the controller with index 1 yields the smallest fidelity and that with index 15 yields the greatest.} 
\label{fig: N=3_bound}
\end{figure}

Similar to the first example, we establish an upper bound on the norm of allowable perturbations such that $\| \partial_\mu \Gamma^{(k)} \| \leq b = 0.1287$.  We then compare the bound given by $\beta = \gamma \| \mathbf{Z} \|_1 b$ with the absolute value of the differential sensitivity computed from Section~\ref{ssec: differential_sensitivity}. As seen in Figure~\ref{fig: N=3_bound}, the bound is accurate to within an order of magnitude for all controllers with fidelity error between $10^{-2}$ and $10^{-3}$. Specifically the ratio of the bound to $\max\limits_\mu | \partial_\mu \mathsf{F}_G |$ has a mean of $7.73$ for this set of controllers.

\section{Conclusion}\label{sec: conclusion_time_varying}

We extended the applicability of the differential sensitivity as an analysis tool to systems driven by continuously varying controls.  We showed that the worst-case sensitivity resulting from a set of norm-bounded perturbations can be reliably bounded with only knowledge of the nominal system and controller.  We also showed that vanishing sensitivity to Hamiltonian uncertainty, first observed for state transfer induced by static fields in~\cite[Th. 3]{schirmer_2017_feedback_laws_spintronics}, is more general, extending to gate operation and time-varying controls.

Future work should focus on \textit{synthesis} of controls with reduced sensitivity.  We conjecture that minimization of $ \| {\mathbf{Z}} \|_1$ yields improved robustness, as defined by the size of the differential sensitivity, to arbitrary uncertainty or perturbation structures, although this requires further exploration.  Another challenge for continuous controls requiring further exploration are accurate and efficient numerical integration schemes with guaranteed bounds on the numerical approximation errors, perhaps beyond the Magnus expansion. Finally, the applying the vanishing sensitivity results to \textit{open} systems should be investigated to better determine the generality of this property. 

\bibliographystyle{ieeetr}
\bibliography{bibliography}

\end{document}